\documentclass[11pt,a4paper]{article}

\usepackage[utf8]{inputenc}
\usepackage[T1]{fontenc}
\usepackage{lmodern} 
\usepackage[english]{babel}
\usepackage{amsmath, amsfonts, amssymb, amsthm}
\usepackage{hyperref}
\usepackage{url}
\usepackage{complexity}
\usepackage{graphicx}

\usepackage[capitalise]{cleveref}

\title{Meta Theorem for Hardness on FCP{-}Problem}

\author{
  Atsuki Nagao\thanks{Faculty of Core Research Natural Science Division, Ochanomizu University, Tokyo, Japan. \texttt{a-nagao@is.ocha.ac.jp}. JSPS KAKENHI Grant Number 23K10981, and 24K02898.} 
  \and 
  Mei Sekiguchi\thanks{Graduate School of Humanities and Sciences, Ochanomizu University, Tokyo, Japan. \texttt{g2020529@is.ocha.ac.jp}.}
}

\date{} 

\newtheorem{theorem}{Theorem}
\newtheorem{lemma}[theorem]{Lemma}

\newtheorem{definition}[theorem]{Definition}
\newtheorem{corollary}[theorem]{Corollary}


\newcommand{\sigmatwop}{$\mathrm{\Sigma}_2^\P$}

\usepackage{abstract}

\bibliographystyle{plain}

\begin{document}

\maketitle

\begin{abstract}
The Fewest Clues Problem (FCP) framework has been introduced to study the complexity of determining whether a solution to an \NP~problem can be uniquely identified by specifying a subset of the certificate. 
For a given problem $P \in \NP$, its FCP variant is denoted by FCP-$P$. 
While several \NP-complete problems have been shown to have \sigmatwop-complete FCP variants, it remains open whether this holds for all \NP-complete problems. 

In this work, we propose a meta-theorem that establishes the \sigmatwop-completeness of FCP-$P$ under the condition that the \NP-hardness of $P$ is proven via a polynomial-time reduction satisfying certain structural properties. 
Furthermore, we apply the meta-theorem to demonstrate the \sigmatwop-completeness of the FCP variants of several \NP-complete problems.
\end{abstract}

\section{introduction}
One of the major themes in theoretical computer science is understanding the computational complexity of determining whether a solution to a problem is unique.
This \textit{uniqueness problem} asks whether a given instance admits exactly one solution that satisfies its constraints.
A well-known example is the Unique $k$-SAT problem, whose complexity status—whether it is $\P^\NP$-complete—remains unresolved~\cite{blass1982unique}.

Closely related to this is the study of \textit{uniquification problem}, which investigates whether one can modify an instance so that it has a unique solution.
The Fewest Clues Problem (FCP)~\cite{Erik2018} provides a formal framework for this: given an instance of an \NP-complete problem and an integer $k$, the question is whether there exists a partial assignment of size $k$ (called \textit{clues}) that uniquely determines the solution.
It has been shown, for instance, that FCP-1-in-3 SAT is \sigmatwop-complete.

Here, \sigmatwop is the class of languages decidable by a polynomial-time nondeterministic Turing machine with access to an \NP~oracle.
In fact, the uniqueness versions of several classical problems—such as the Traveling Salesman Problem on undirected graphs, 
Integer Programming, and Knapsack—are known to be \sigmatwop-complete~\cite{papadimitriou2003computational}.
Moreover, it has been established that the FCP versions of certain \NP-complete problems are also \sigmatwop-complete~\cite{Erik2018, horiyama2024theoretical}.

These results suggest a general conjecture: for every \NP-complete problem, its FCP version is \sigmatwop-complete.
However, this conjecture has so far been addressed on a case-by-case basis.
No general methodology currently exists to establish the \sigmatwop-completeness of FCP versions for arbitrary \NP-complete problems.

A closely related framework is the Another Solution Problem (ASP)~\cite{yato2003complexity}, which asks whether a given instance has a solution different from a specified one.
\NP~problems can be shown to be ASP-complete using parsimonious reductions from ASP-hard problems.
In contrast, such reductions are not always sufficient to establish \sigmatwop-completeness for FCP problems.
In fact, there are known cases where the reduction used to prove \NP-completeness differs from that used to show \sigmatwop-completeness~\cite{sekiguchi2024}.

In this work, we propose a meta-theorem for the FCP framework that provides sufficient conditions—beyond parsimonious reductions—for establishing \sigmatwop-completeness.
In particular, our approach involves adding constraints on the structure of certificates across problems involved in the reduction.

\section{Preliminaries}
In this paper, we propose a meta-theorem that establishes \sigmatwop-hardness for the FCP version of an \NP-complete problem, provided that the problem satisfies specific conditions on the structure of its certificate mapping under a parsimonious reduction.
Our approach is based on analyzing how the certificates of two such problems relate to each other under a parsimonious polynomial-time reduction.
To establish our result, we identify structural conditions under which the uniqueness of solutions (as required in FCP) is preserved through the reduction.
In this section, we introduce the necessary definitions and terminology used throughout the paper.

\subsection{Fewest Clues Problem (FCP)}
The Fewest Clues Problem (FCP) was introduced in~\cite{Erik2018} as a framework to ask whether a partial assignment (called a clue) can be made so that a given \NP problem has a unique solution.
In the context of FCP, a \emph{clue} is defined in terms of partial information about a certificate.

\begin{definition}[\cite{Erik2018}]
Let $A$ be a problem in \NP, and let $x = (x_i) \in \Sigma^*$ be a certificate string (i.e., a solution to an instance $I$ of $A$). 
A string $c = (c_i) \in (\Sigma \cup \{\bot\})^*$ is called a \emph{clue} if there exists a certificate string $x$ such that $c_i = x_i$ whenever $c_i \ne \bot$. 
The size of a clue is defined as the number of positions $i$ such that $c_i \ne \bot$. 
We say that $x$ \emph{satisfies} the clue $c$, and write $c \subset x$.
\end{definition}

Using this notion, the FCP version of a problem $A$ in \NP is defined as follows:

\begin{definition}[\cite{Erik2018}]
For a problem $A$ in \NP, \textup{FCP}-$A$ is the following decision problem:
\begin{quote}
Given an instance $I$ of $A$ and an integer $k$, does there exist a clue $c$ of size at most $k$ such that $I$ has a unique solution consistent with $c$?
\end{quote}
\end{definition}

\subsection{Parsimonious Reduction}
To discuss reductions in the context of FCP, we must consider whether reductions preserve the number of solutions.
We therefore focus on the notion of a \emph{parsimonious polynomial-time reduction}~\cite{papadimitriou2003computational}, which is stronger than general polynomial-time reductions.

\begin{definition}
A reduction from a problem $P$ to a problem $Q$ consists of two parts:
a function $R$ that maps instances $I_P$ of $P$ to instances $I_Q=R(I_P)$ of $Q$, and a function $\mathcal{S}$ that maps any solution $s_Q$ of $I_Q$ to a solution $s_P=\mathcal{S}(s_Q)$ of $I_P$.
The reduction is said to be \emph{parsimonious} if $\mathcal{S}$ is the identity function; that is, the number of solutions of $R(I_P)$ is exactly the same as the number of solutions of $x$.
\end{definition}

Throughout this paper, unless otherwise stated, all reductions are assumed to be parsimonious and polynomial-time computable.

\begin{figure}[tb]
    \centering    
    \includegraphics[scale=0.1]{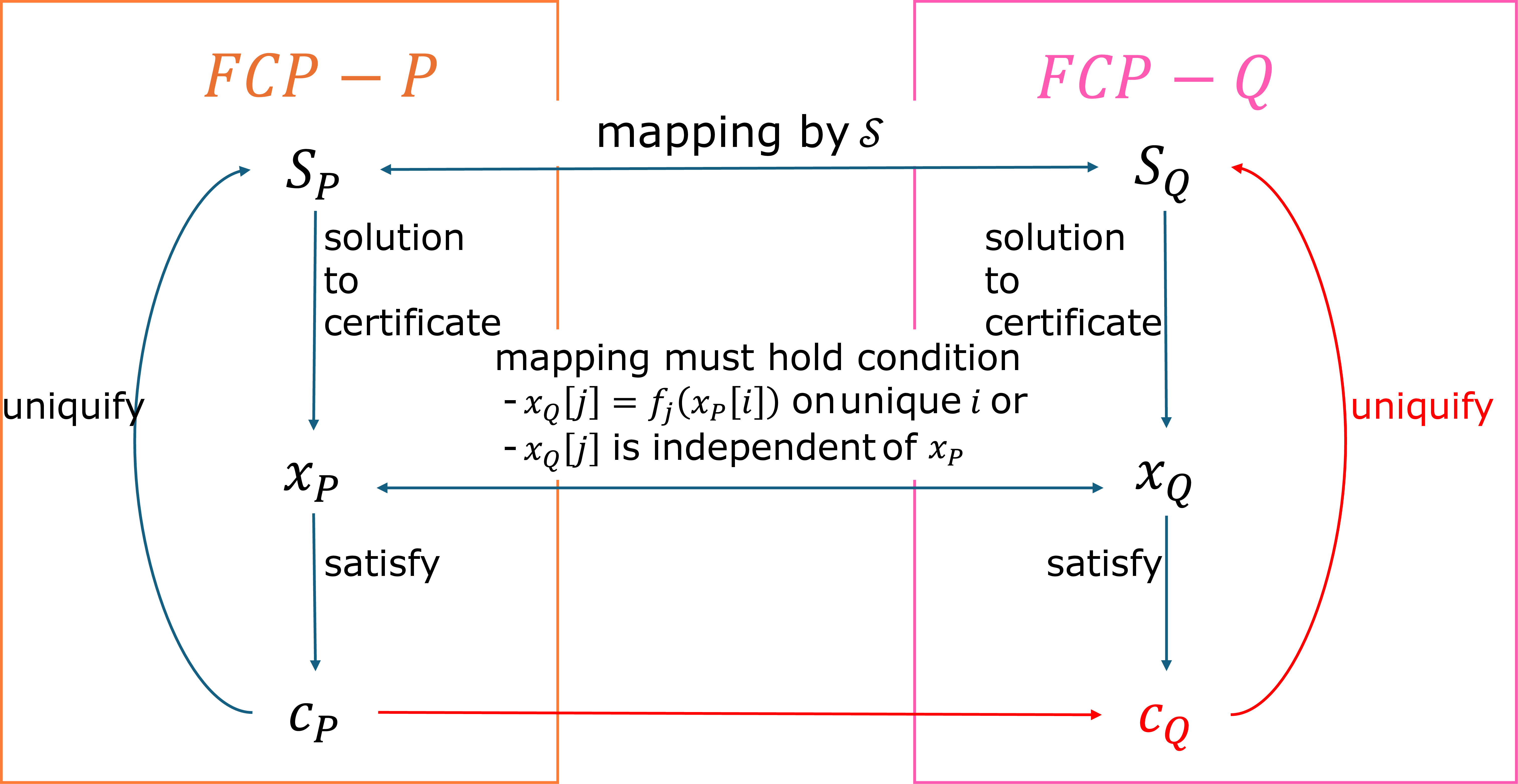}
    \caption{Overview of variables related to FCP-$P$ and FCP-$Q$ in Theorem~\ref{thm:meta}. The conditions and relationships among $c_P$, $c_Q$, and others are assumptions of the theorem. The conclusion is that $c_Q$ exists and can uniquify $S_Q$ using $c_P$ and the parsimonious reduction.}
    \label{fig:overall}
\end{figure}

\subsection{Meta-Theorem}
In this section, we introduce our main result: a meta-theorem for proving the \sigmatwop-completeness of FCP problems using parsimonious reductions and structural assumptions on certificate mappings.

Let $P$ be an \NP-complete problem whose FCP version, FCP-$P$, is known to be \sigmatwop-complete. 
Let $Q$ be a problem such that there exists a parsimonious polynomial-time reduction $R$ from $P$ to $Q$.
Let $I_P$ and $I_Q$ be instances of $P$ and $Q$, respectively, and let $S_P$ and $S_Q$ denote their respective solutions (when the instances are Yes-instances).
Let $x_P \in \Gamma_P^m$ and $x_Q \in \Gamma_Q^n$ be certificate strings for $S_P$ and $S_Q$, respectively.
Let $c_P$ be a clue for FCP-$P$, and $c_Q$ be a clue for FCP-$Q$.

We now state our meta-theorem.

\begin{theorem}\label{thm:meta}
Assume the following conditions:
\begin{itemize}
    \item $P$ is an \NP-complete problem, and FCP-$P$ is \sigmatwop-complete.
    \item There exists a parsimonious polynomial-time reduction $R$ from $P$ to $Q$.
    \item $x_P \in \Gamma_P^m$ is a certificate for the solution $S_P$ of instance $I_P$.
    \item $x_Q \in \Gamma_Q^n$ is a certificate for the solution $S_Q$ of instance $I_Q$ corresponding to $S_P$ via $R$.
\end{itemize}

If the following condition holds for every $j = 0, \dots, n-1$, then FCP-$Q$ is \sigmatwop-complete:

For each $j$, one of the following must be satisfied:
\begin{itemize}
    \item There exists an injective function $f_j: \Gamma_P \rightarrow \Gamma_Q$ and a unique $i$ such that $x_Q[j] = f_j(x_P[i])$.
    \item $x_Q[j]$ is constant with respect to $x_P$.
\end{itemize}
\end{theorem}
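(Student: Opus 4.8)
\section*{Proof proposal}

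The plan is to prove the two ingredients of \sigmatwop-completeness separately, and to isolate where the structural hypotheses are actually consumed. Membership of FCP-$Q$ in \sigmatwop is immediate for any $Q\in\NP$: one guesses existentially a clue $c$ of size at most $k$ together with a witnessing solution $s$, checks $c\subset s$ in polynomial time, and issues a single co-\NP query asking whether some second solution $s'\neq s$ also satisfies $c$. The resulting $\exists\,\forall$ alternation certifies membership, so the hypotheses are needed only for hardness.

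For hardness it suffices, since FCP-$P$ is \sigmatwop-complete, to exhibit a polynomial-time many-one reduction FCP-$P\le_p$ FCP-$Q$. I would send an instance $(I_P,k)$ to $(R(I_P),k)$, reusing the given parsimonious $R$ and keeping the clue budget $k$ unchanged. Write $\mathrm{src}(j)=i$ when position $j$ is of the first (injective) type, so $x_Q[j]=f_j(x_P[i])$, and let $\kappa_P,\kappa_Q$ denote the minimum uniquifying-clue sizes for $I_P$ and $I_Q=R(I_P)$. Parsimony furnishes a bijection between the solutions of $I_P$ and those of $I_Q$, and correctness will follow once I show $\kappa_P=\kappa_Q$, since then $(I_P,k)$ is a Yes-instance iff $\kappa_P\le k$ iff $\kappa_Q\le k$ iff $(R(I_P),k)$ is a Yes-instance.

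The reverse inequality $\kappa_P\le\kappa_Q$ is the easy half. Given a clue $c_Q$ of size at most $k$ uniquifying $I_Q$, I define $c_P$ by $c_P[\mathrm{src}(j)]=f_j^{-1}(c_Q[j])$ for each fixed type-(a) position $j$ and discard every fixed type-(b) position, since a constant coordinate constrains nothing (all valid $Q$-certificates agree there). Distinct fixed positions may share a source, so $|c_P|\le|c_Q|\le k$. As $f_j$ is injective, a $Q$-certificate satisfies the type-(a) part of $c_Q$ precisely when its preimage satisfies $c_P$; the type-(b) part being automatic, the parsimonious bijection carries the unique $Q$-solution to a unique $P$-solution, so $c_P$ uniquifies $I_P$.

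The forward inequality $\kappa_Q\le\kappa_P$ is where the real work lies, and I expect it to be the main obstacle. Given a clue $c_P$ of size at most $k$ uniquifying $I_P$, I would push each fixed position $i$ to a type-(a) position $j$ with $\mathrm{src}(j)=i$, setting $c_Q[j]=f_j(c_P[i])$; injectivity pins $x_P[i]=c_P[i]$, distinct sources give distinct $j$, and the bijection transports uniqueness, yielding $|c_Q|\le|c_P|\le k$. The delicate point is that this presupposes that \emph{every} position fixed by $c_P$ is the source of some type-(a) position --- that no coordinate on which an optimal clue relies is left unread by $x_Q$. I would justify this from parsimony: since the solution correspondence is a bijection and $x_Q$ depends only on the source coordinates of $x_P$, those source coordinates must already determine the entire $P$-certificate, so an optimal uniquifying clue can be taken to fix only source coordinates without increasing its size. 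Making this reduction-to-source-coordinates rigorous --- ruling out that an unread coordinate could distinguish solutions more cheaply than the source coordinates it is determined by --- is the crux on which the argument turns; combined with the reverse inequality it gives $\kappa_P=\kappa_Q$ and hence the desired equivalence.
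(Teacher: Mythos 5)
Your overall architecture is the same as the paper's: hardness via the map $(I_P,k)\mapsto(R(I_P),k)$ with the clue budget preserved, split into two transfer directions; your reverse inequality $\kappa_P\le\kappa_Q$ coincides with the paper's Lemma~\ref{lem:no2no} and is sound (including the observation that clue positions on constant coordinates can be discarded). The genuine problem sits exactly where you placed it: the forward inequality $\kappa_Q\le\kappa_P$. Your proposed repair --- that because the source coordinates of $x_P$ determine the whole $P$-certificate, an optimal uniquifying clue may be assumed to fix only source coordinates at no increase in size --- is false, so the gap cannot be closed along the lines you sketch. Determination is a property of the whole source tuple; it does not preserve clue \emph{sizes}, because a single unread coordinate can aggregate information about several source coordinates. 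Concretely, suppose the solution set of $I_P$ is $\{(0,0,1),(0,1,0),(1,0,0),(1,1,0)\}$ (the third bit is the NOR of the first two) and $x_Q=(x_P[0],x_P[1])$ with $f_0=f_1=\mathrm{id}$. Every $j$ satisfies the first bullet of the theorem's condition, the solution correspondence is a bijection, and coordinate $2$ of $x_P$ is not the source of any $j$; nothing in the stated hypotheses forbids this configuration. Then $c_P=(\bot,\bot,1)$ uniquifies $I_P$ with one clue, while the solutions of $I_Q$ are all four strings of $\{0,1\}^2$, so no clue of size $1$ uniquifies $I_Q$: we get $\kappa_P=1<2=\kappa_Q$, and the proposed reduction maps a yes-instance to a no-instance at $k=1$.

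You should also know that the paper's own proof has the identical hole, only silently: in Lemma~\ref{lem:yes2yes} the constructed $c_Q$ carries information only about those $i\in L$ that are read by some position of $x_Q$, yet the contradiction step asserts that the second $Q$-solution pulls back to a $P$-solution that ``satisfies $c_P$'' --- which fails precisely at fixed coordinates of $x_P$ that no $f_j$ reads, as in the example above. So your instinct that this is where the real work lies is correct, and the honest conclusion is that the statement needs an additional hypothesis, for instance that every index $i$ of $x_P$ is the source of at least one index $j$ of $x_Q$ (for every corresponding solution pair). Under that extra assumption your push-forward construction goes through: send each fixed $i$ to the least $j$ reading it, so $|c_Q|\le|c_P|$, and injectivity of $f_j$ forces any $Q$-solution consistent with $c_Q$ to pull back to a $P$-solution consistent with all of $c_P$, restoring the contradiction and hence both inequalities.
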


\begin{figure}[hbt]
    \centering    
    \includegraphics[scale=0.082]{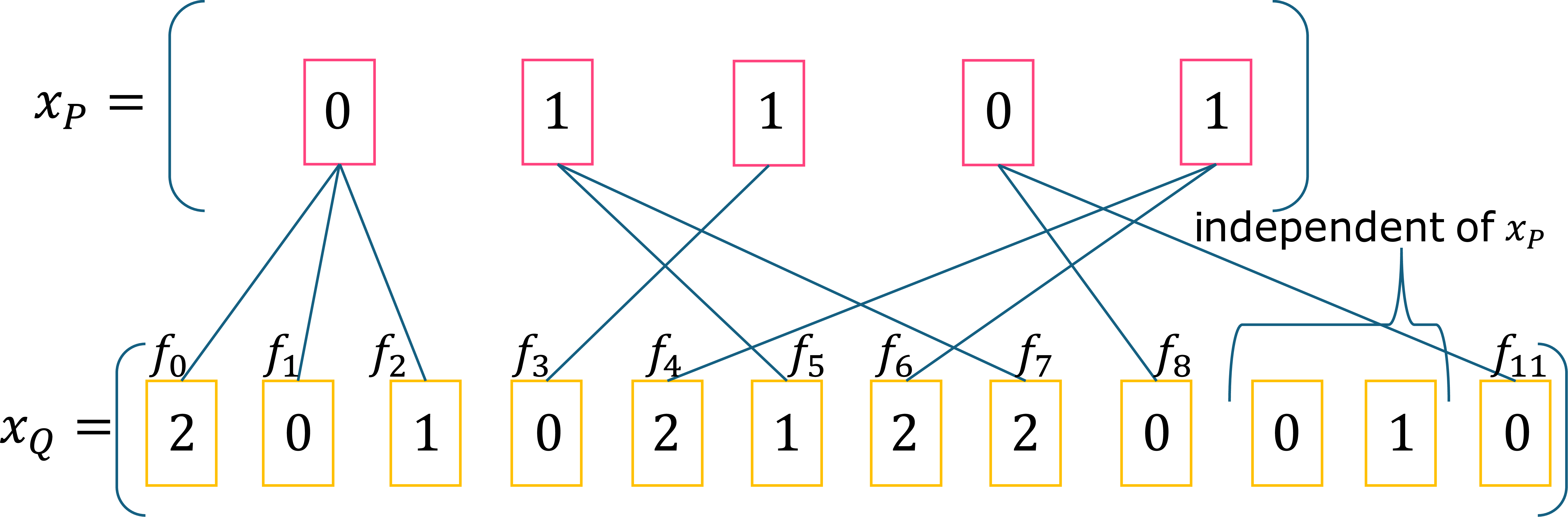}
    \caption{An example of $x_P$ and $x_Q$ satisfying the conditions of Theorem~\ref{thm:meta}. Each character of $x_Q$ depends on exactly one character of $x_P$ via an injective function $f_j$.}
    \label{fig:okexample}
\end{figure}

\begin{figure}[tb]
    \centering    
    \includegraphics[scale=0.082]{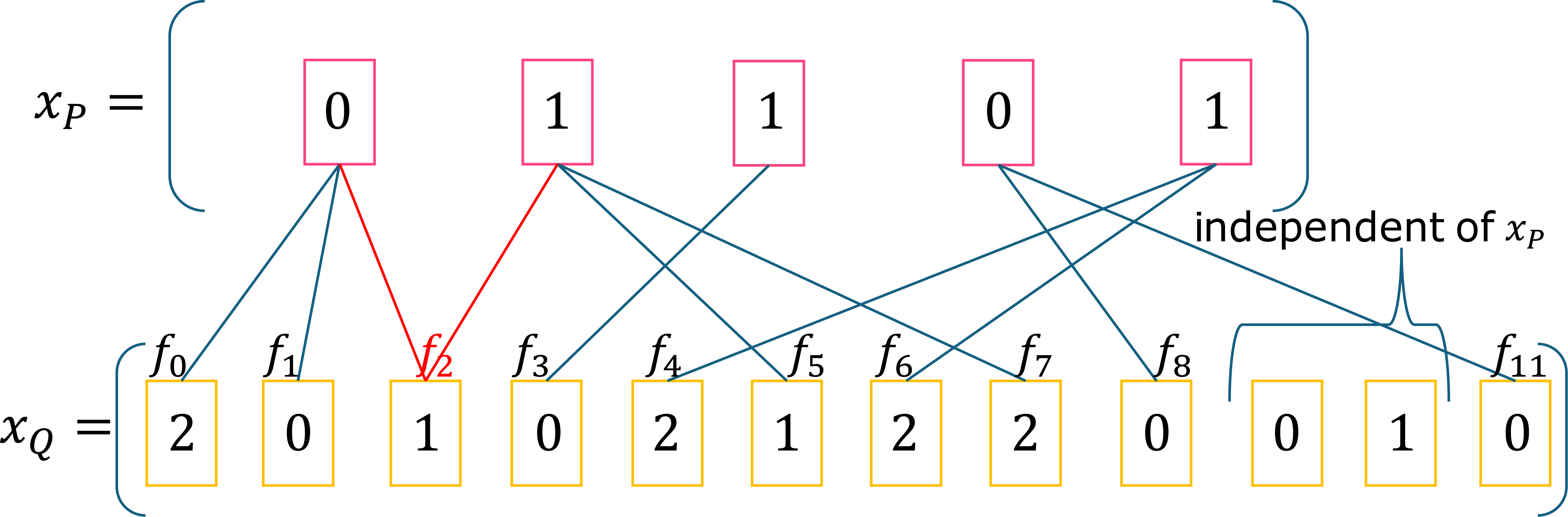}
    \caption{An example of $x_P$ and $x_Q$ that \textbf{does not} satisfy the conditions of Theorem~\ref{thm:meta}. Here, $x_Q[2]$ depends on both $x_P[0]$ and $x_P[1]$, violating the uniqueness condition.}
    \label{fig:ngexample}
\end{figure}

Figure~\ref{fig:okexample} illustrates a case where the conditions of the theorem are satisfied: each character in \( x_Q \) is either determined by an injective function depending on exactly one character in \( x_P \), or is a constant independent of \( x_P \).
As illustrated in Figure~\ref{fig:ngexample}, if a character in $x_Q$ depends on multiple characters in $x_P$, then the condition of uniqueness is violated.
Similarly, if any function $f_j$ is not injective, the theorem does not apply.

\subsection{Terminology}

To support the proof of Theorem~\ref{thm:meta}, we summarize the key terms used in the context of problems $P$ and $Q$, their FCP variants, and the associated instances and clues:
\begin{itemize}
    \item $I_P$ is an instance of $P$, and $S_P$ is a solution to $I_P$ if it is a Yes-instance.
    \item $I_Q$ is an instance of $Q$, and $S_Q$ is a solution to $I_Q$ if it is a Yes-instance.
    \item $c_P$ is a clue that uniquifies $S_P$ for the instance $(I_P, k)$ of FCP-$P$.
    \item $c_Q$ is a clue that uniquifies $S_Q$ for the instance $(I_Q, k)$ of FCP-$Q$.
\end{itemize}

\section{Proof of the Theorem}
In this section, we prove Theorem~\ref{thm:meta}.  
Under the assumptions stated in the theorem, we construct a reduction from FCP-$P$ to FCP-$Q$.

\begin{proof}
Let $I_P$ be a yes-instance of $P$, and let $I_Q$ be the corresponding instance of $Q$ under the reduction $R$.  
Let $S_P$ be the unique solution to $I_P$, and $S_Q$ be the corresponding solution to $I_Q$.  
Let $x_P \in \Gamma_P^m$ be the certificate sequence representing $S_P$, and $x_Q \in \Gamma_Q^n$ be the certificate sequence representing $S_Q$.

We assume that for any solution $S_P$ and for each $j = 0, \dots, n-1$, one of the following holds:
\begin{itemize}
    \item There exists an injective function $f_j : \Gamma_P \to \Gamma_Q$ and a unique $i$ such that $x_Q[j] = f_j(x_P[i])$.
    \item $x_Q[j]$ is independent of $x_P$.
\end{itemize}

Each index $j$ of $x_Q$ can thus be categorized into at most $m+1$ groups, depending on which index of $x_P$ it depends on, or whether it is independent.  
The value of each such $x_Q[j]$ is either determined by a specific index of $x_P$ via some $f_j$, or is fixed regardless of $x_P$.

We now prove the following two lemmas, which establish the correctness of the reduction.

\begin{lemma} \label{lem:yes2yes}
Given $I_P$ and an integer $k$,  
if there exists a clue $c_P$ of size at most $k$ that uniquely determines the solution to $I_P$,  
then there exists a clue $c_Q$ of size at most $k$ that uniquely determines the solution to $I_Q$.
\end{lemma}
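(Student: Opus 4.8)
The plan is to exhibit the clue $c_Q$ explicitly as the image of $c_P$ under the certificate correspondence induced by $R$, and then verify three things in turn: that $c_Q$ is a legitimate clue consistent with $x_Q$, that it leaves $S_Q$ as the unique solution of $I_Q$, and that its size does not exceed that of $c_P$. Since the reduction is parsimonious, the map $\phi$ sending each valid certificate $x_P$ to its corresponding $x_Q$ is a bijection between the solution sets of $I_P$ and $I_Q$; I would use this bijection as the backbone of the uniqueness argument, pushing the clue forward along $\phi$ and pulling back candidate witnesses along $\phi^{-1}$.

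Concretely, first I would build $c_Q$ as follows. For every clued position $i$ of $c_P$ (i.e.\ $c_P[i] \neq \bot$), I select one position $j$ of $x_Q$ whose value is governed by $i$ through the injective map $f_j$, so that $x_Q[j] = f_j(x_P[i])$, and set $c_Q[j] := f_j(c_P[i])$; all remaining positions of $c_Q$, including every constant position of $x_Q$, are set to $\bot$. Validity is then immediate: because $c_P \subset x_P$ we have $c_P[i] = x_P[i]$, hence $c_Q[j] = f_j(x_P[i]) = x_Q[j]$, so that $c_Q \subset x_Q$. The size bound is equally direct, since the construction assigns at most one clued position of $c_Q$ to each clued position of $c_P$, giving $|c_Q| \le |c_P| \le k$.

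The core of the argument is uniqueness. Suppose some solution $S_Q'$ of $I_Q$, with certificate $x_Q'$, is consistent with $c_Q$, and let $S_P'$ be the corresponding solution of $I_P$ with certificate $x_P'$. For each clued position $j$ of $c_Q$, governed by the index $i$, the structural hypothesis applied to $S_P'$ gives $x_Q'[j] = f_j(x_P'[i])$, while consistency with $c_Q$ gives $x_Q'[j] = f_j(x_P[i])$; injectivity of $f_j$ then forces $x_P'[i] = x_P[i] = c_P[i]$. Hence $x_P'$ agrees with $c_P$ at every clued position of $c_P$, so $c_P \subset x_P'$. Because $c_P$ uniquifies $S_P$, this yields $x_P' = x_P$, and applying $\phi$ gives $x_Q' = x_Q$, i.e.\ $S_Q' = S_Q$, as required.

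The step I expect to be the main obstacle is justifying, in the construction above, that every clued position $i$ of $c_P$ really does have a position $j$ of $x_Q$ that it governs, so that the pullback in the uniqueness step recovers all of $c_P$ and not merely a proper subset of it. If some clued position of $c_P$ were referenced by no character of $x_Q$, the transferred clue would constrain only part of $c_P$, and a second solution agreeing with $c_P$ exactly on the referenced positions could survive, breaking uniqueness; re-routing such a clue through other characters could also push the size past $k$. The clean way to dispose of this is to exploit that the position-dependency of the reduction is surjective, meaning every character of $x_P$ is used by at least one character of $x_Q$, which is exactly the regime of the reductions to which the meta-theorem is applied and which makes the pullback reconstruct $c_P$ in full. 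Under this condition the three verifications above close the proof of the lemma.
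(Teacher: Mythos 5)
Your construction and uniqueness argument coincide with the paper's own proof: initialize $c_Q = \bot^n$, push each clued position $i$ of $c_P$ through one chosen position $j$ that references it by setting $c_Q[j] = f_j(c_P[i])$ (the paper's tie-breaking rule is to take the smallest such $j$), and show that any solution of $I_Q$ consistent with $c_Q$ pulls back to a solution of $I_P$ consistent with $c_P$, so that uniqueness on the $P$ side forces uniqueness on the $Q$ side. Two differences are worth recording, both in your favor. First, your injectivity step---combining $x_Q'[j] = f_j(x_P'[i])$ (the structural hypothesis applied to $S_P'$) with $x_Q'[j] = f_j(x_P[i])$ (consistency with $c_Q$) to force $x_P'[i] = x_P[i] = c_P[i]$---is precisely the justification the paper omits: its proof asserts that the two pulled-back solutions ``both satisfy $c_P$ by.'' and the sentence trails off unjustified, so your version is the complete one. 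Second, the obstacle you flag at the end is a genuine gap in the theorem's hypotheses, not a defect of your write-up: nothing in the stated conditions guarantees that every position of $x_P$ is referenced by some position of $x_Q$, and without that the lemma is in fact false. Take $\Gamma_P = \{a,b\}$ and an instance whose valid certificates are $(a,a,a)$, $(a,b,b)$, $(b,a,b)$, $(b,b,b)$, with positions $0,1$ referenced by $x_Q$ via identity maps and position $2$ unreferenced; the induced reduction is parsimonious because the restrictions to positions $0,1$ are pairwise distinct. The size-$1$ clue fixing position $2$ to $a$ uniquifies $(a,a,a)$, so $(I_P,1)$ is a yes-instance of FCP-$P$; but any size-$1$ clue on $I_Q$ either pins one referenced coordinate to a value, leaving exactly two consistent solutions, or sits on a constant position and discriminates nothing, so $(I_Q,1)$ is a no-instance of FCP-$Q$. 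Hence the surjectivity condition you introduce is not a convenience but a necessary additional hypothesis---one the paper's proof silently assumes at exactly the step it leaves unfinished.
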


\begin{proof}
Let $L$ be the set of indices in $x_P$ at which $c_P$ assigns non-$\bot$ values.  
We construct a clue $c_Q \in \Gamma_Q^n$ that uniquely determines the solution to $I_Q$ using $f_0, \dots, f_{n-1}$, $L$, and $x_Q$.

We initialize $c_Q = \bot^n$.
For each index $j$ such that $x_Q[j]$ depends on $x_P[i]$ via $f_j$,
we update $c_Q[j] = f_j(c_P[i])$ only if $i \in L$ and $j$ is the smallest index such that $f_j$ maps from $x_P[i]$.
Otherwise, $c_Q[j]$ remains $\bot$.
For indices $j$ independent of $x_P$, $c_Q[j]$ also remains $\bot$.

Figures~\ref{fig:clue1} through~\ref{fig:clue5} (they are on the last page of this manusucript) illustrate the partitioning of index sets and the update process for $c_Q$.

We now prove that the constructed $c_Q$ uniquely determines the solution to $I_Q$.  
Suppose, for contradiction, that it does not.  
Then there exist two distinct solutions $S_Q$ and $S_Q'$ to $I_Q$ that both satisfy $c_Q$.  
By the existence of $R$, there exist corresponding solutions $S_P$ and $S_P'$ to $I_P$ for $S_Q$ and $S_Q'$, respectively.  
Since $S_Q \neq S_Q'$, it follows that $S_P \neq S_P'$, and both satisfy $c_P$ by.
This contradicts the assumption that $c_P$ uniquely determines $S_P$.

Therefore, $c_Q$ uniquely determines the solution to $I_Q$.
\end{proof}

\begin{lemma} \label{lem:no2no}
Given $I_Q$ and an integer $k$,  
if there exists a clue $c_Q$ of size at most $k$ that uniquely determines the solution to $I_Q$,  
then there exists a clue $c_P$ of size at most $k$ that uniquely determines the solution to $I_P$.
\end{lemma}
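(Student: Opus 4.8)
The plan is to mirror the construction of \Cref{lem:yes2yes} but run it ``backwards'': instead of pushing a clue forward along the functions $f_j$, I would pull $c_Q$ back along their inverses. Concretely, I would initialize $c_P=\bot^m$ and, for each index $i\in\{0,\dots,m-1\}$, set $c_P[i]$ to a non-$\bot$ value exactly when some position $j$ with $c_Q[j]\neq\bot$ is of the non-constant type and depends on $i$; letting $j_0$ be the smallest such $j$, I would put $c_P[i]=f_{j_0}^{-1}(c_Q[j_0])$. Positions $j$ that are constant with respect to $x_P$, together with the $\bot$-positions of $c_Q$, contribute nothing to $c_P$.

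Next I would verify that $c_P$ is well defined, is a legitimate clue, and has size at most $k$. Since $c_Q$ is a clue it is consistent with the unique solution $S_Q^{\ast}$ it isolates; write $\hat{x}_Q$ for its certificate and $\hat{x}_P$ for the corresponding certificate of $I_P$ under the parsimonious reduction. For every non-constant position $j$ with $c_Q[j]\neq\bot$ we then have $c_Q[j]=\hat{x}_Q[j]=f_j(\hat{x}_P[i])$, so $c_Q[j]$ lies in the image of the injective map $f_j$ and $f_j^{-1}(c_Q[j])=\hat{x}_P[i]$ is unambiguous. Hence $c_P[i]=\hat{x}_P[i]$ on its whole support, giving $c_P\subset\hat{x}_P$, so $c_P$ is valid. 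Because each position $j$ depends on a \emph{unique} index $i$, the assignment $i\mapsto j_0(i)$ is injective into the non-$\bot$ positions of $c_Q$, whence $|c_P|\le|c_Q|\le k$; collapsing several positions of $c_Q$ onto one index $i$, and discarding the constant positions, can only decrease the size.

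For uniqueness I would argue by contraposition. Suppose $c_P$ fails to isolate a single solution of $I_P$, so distinct solutions $S_P\neq S_P'$, with certificates $x_P,x_P'$, both satisfy $c_P$. By parsimony the reduction is a bijection on solutions, so the images $S_Q\neq S_Q'$, with certificates $x_Q,x_Q'$, are distinct. I would then show both satisfy $c_Q$: for a constant position $j$ the value $x_Q[j]$ is forced independently of the solution and equals $\hat{x}_Q[j]=c_Q[j]$; for a non-constant position $j$ with $c_Q[j]\neq\bot$ depending on $i$, the index $i$ lies in the support of $c_P$, so $x_P[i]=c_P[i]=\hat{x}_P[i]$ and therefore $x_Q[j]=f_j(x_P[i])=f_j(\hat{x}_P[i])=\hat{x}_Q[j]=c_Q[j]$. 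Thus $S_Q$ and $S_Q'$ both satisfy $c_Q$, contradicting that $c_Q$ isolates a unique solution of $I_Q$; hence $c_P$ uniquely determines the solution of $I_P$.

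The step I expect to be the main obstacle is the interaction at an index $i$ that is ``hit'' by several clue positions of $c_Q$: one must check that those positions are mutually consistent and that fixing the single value $c_P[i]$ simultaneously re-forces all of them. This is precisely where both structural hypotheses are indispensable—injectivity of each $f_j$ makes $f_j^{-1}(c_Q[j])$ unambiguous, while the fact that $c_Q$ is witnessed by one common certificate $\hat{x}_Q$ guarantees that all positions depending on $i$ agree on the value $\hat{x}_P[i]$, so no conflict can arise in the back-translation.
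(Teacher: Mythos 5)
Your proposal is correct and follows essentially the same approach as the paper's proof: you pull $c_Q$ back along the inverses $f_j^{-1}$ to construct $c_P$, then derive a contradiction from two distinct solutions of $I_P$ mapped forward through the parsimonious reduction. In fact, your write-up is more careful than the paper's at the crucial step --- the paper merely asserts that the two image solutions $S_Q$ and $S_Q'$ ``both satisfy $c_Q$,'' whereas you verify this position-by-position (constant positions versus positions pulled back through $f_j$), and you also check well-definedness when several clue positions hit the same index $i$, validity of $c_P$ as a clue, and the size bound $|c_P|\le|c_Q|\le k$, all of which the paper leaves implicit.
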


\begin{proof}
We consider the equivalence classes of indices of $x_Q$, where each class consists of indices that depend on the same index $i$ of $x_P$ via some $f_j$, or are independent of $x_P$.
Let us examine each such class individually.
If a class contains any index $j$ such that $c_Q[j] \neq \bot$, and this $j$ depends on some $x_P[i]$ via injective function $f_j$, then we set $c_P[i] = f_j^{-1}(c_Q[j])$.
Since $f_j$ is injective, this value is uniquely determined.
For all indices $x_P[i]$ that correspond to classes where all associated $c_Q[j] = \bot$, we assign $c_P[i] = \bot$.
In this way, we construct a clue $c_P$ of size at most $k$.

To prove that $c_P$ uniquely determines the solution to $I_P$, suppose for contradiction that there exist two distinct solutions $S_P$ and $S_P'$ that both satisfy $c_P$.  
By the existence of $R$, there exist corresponding solutions $S_Q$ and $S_Q'$ to $I_Q$.  
Then $S_Q \neq S_Q'$, and both satisfy $c_Q$, which contradicts the assumption that $c_Q$ uniquely determines $S_Q$.

Therefore, $c_P$ uniquely determines the solution to $I_P$.
\end{proof}

From Lemmas~\ref{lem:yes2yes} and~\ref{lem:no2no}, we conclude that FCP-$Q$ admits a clue of size at most $k$ if and only if FCP-$P$ does.  
Thus, the reduction from FCP-$P$ to FCP-$Q$ is sound and complete.  
This reduction runs in polynomial time, as it uses only the computation of $R$ and the structural mappings $f_j$.  
Hence, FCP-$Q$ is \sigmatwop-complete.
\end{proof}

\section{Applications of the Meta-Theorem}

By applying Theorem~\ref{thm:meta} to known reductions that establish \NP-hardness, we can prove the \sigmatwop-completeness of various FCP problems.  
The following are some examples.

\begin{corollary}
    \textup{FCP-\textsc{Yosenabe}} is \sigmatwop-complete. 
\end{corollary}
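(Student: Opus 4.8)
The plan is to instantiate the meta-theorem (Theorem~\ref{thm:meta}) with $Q = \textsc{Yosenabe}$. First I would dispatch membership, \emph{i.e.}, that FCP-\textsc{Yosenabe} lies in \sigmatwop: this follows from the generic argument that an FCP problem over an \NP~certificate is expressible as ``there exists a clue $c$ of size at most $k$ such that for every candidate placement, uniqueness holds'', an $\exists\forall$ predicate with a polynomial-time verifier. The substance of the corollary is therefore \sigmatwop-hardness, which I would obtain by checking that a known \NP-hardness reduction into \textsc{Yosenabe} meets the hypotheses of Theorem~\ref{thm:meta}.

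Concretely, I would start from the published \NP-hardness proof of \textsc{Yosenabe}, identify the source problem $P$ and the reduction $R : I_P \mapsto I_Q$, and verify two things: (i) FCP-$P$ is already known to be \sigmatwop-complete---if the source is 1-in-3 SAT this is immediate from the introduction, and otherwise I would route through a short chain $P_0 \to P_1 \to \cdots \to \textsc{Yosenabe}$ of reductions, each separately satisfying the meta-theorem's conditions; and (ii) $R$ is parsimonious, so that satisfying assignments of $I_P$ correspond bijectively to valid \textsc{Yosenabe} placements for $I_Q$. The latter typically reduces to confirming that the gadgets admit exactly one legal circle-movement pattern per truth assignment and introduce no spurious solutions.

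The heart of the argument is the certificate analysis. A \textsc{Yosenabe} solution is naturally encoded as a string $x_Q$ recording, for each numbered circle, the direction it slides (equivalently, the pot into which it is gathered). Having fixed this encoding, I would show that every coordinate $x_Q[j]$ either (a) is determined by a single coordinate $x_P[i]$---the truth value of one variable, or the state of one gadget component---through an injective map $f_j$, or (b) is constant, as for filler circles whose destination is forced irrespective of the assignment. Establishing (a) for the variable-encoding circles is direct, since each such circle's movement toggles with exactly one literal.

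The main obstacle I anticipate lies in the clause and wire-crossing gadgets. In a naive reduction a clause circle's final pot depends jointly on several literals, and a crossover can entangle two variable lines---precisely the situation forbidden in Figure~\ref{fig:ngexample}, where $x_Q[j]$ depends on more than one $x_P[i]$. I would therefore redesign these gadgets so that each circle carries information about only one variable, pushing the logical combination into the pot-sum constraints of $I_Q$ rather than into the movement of any single circle, all while preserving parsimony. Once every coordinate of $x_Q$ is shown to be single-source-injective or constant, Theorem~\ref{thm:meta} applies verbatim and yields the \sigmatwop-completeness of FCP-\textsc{Yosenabe}.
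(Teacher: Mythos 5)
Your overall strategy coincides with the paper's: treat membership in \sigmatwop{} by the generic $\exists\forall$ argument, and obtain hardness by applying Theorem~\ref{thm:meta} to a known \NP-hardness reduction into \textsc{Yosenabe}. The paper's proof of this corollary is precisely the claim that the existing reduction of~\cite{iwamoto2014yosenabe} \emph{already} satisfies the hypotheses of the meta-theorem, i.e., that every coordinate of the \textsc{Yosenabe} certificate is either injectively determined by a single coordinate of the source certificate or is constant, so that no new construction is required.

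The gap in your proposal is that you neither verify this for the existing reduction nor carry out the alternative you propose. You identify (correctly, as a risk) that clause and crossover gadgets could make some $x_Q[j]$ depend on several coordinates $x_P[i]$, which is exactly the situation of Figure~\ref{fig:ngexample} that the theorem forbids, and you then state that you ``would redesign these gadgets.'' But that redesign is the entire technical content of the corollary: a new gadget set requires a full re-proof that the resulting map is a correct, polynomial-time, \emph{parsimonious} reduction (no spurious circle-movement patterns, exactly one legal solution per satisfying assignment), followed by the coordinate-by-coordinate verification of the injectivity/constancy condition; none of this is done. In addition, you leave the source problem $P$ unspecified and cover it with a ``short chain'' of intermediate reductions, each of which would itself have to be shown to satisfy the meta-theorem's conditions --- another deferred obligation. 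As written, the proposal is a plan for a proof rather than a proof: the decisive step --- either checking that the published reduction meets the conditions of Theorem~\ref{thm:meta}, or constructing and fully verifying a replacement reduction --- is missing.
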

\begin{proof} 
    The reduction presented in \cite{iwamoto2014yosenabe}, which establishes the \NP-hardness of \textsc{Yosenabe}, satisfies the conditions of Theorem~\ref{thm:meta}.
\end{proof}

\begin{corollary}
    \textup{FCP-\textsc{Choco Banana}} is \sigmatwop-complete. 
\end{corollary}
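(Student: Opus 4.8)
The plan is to invoke Theorem~\ref{thm:meta} directly, taking $Q$ to be \textsc{Choco Banana} and letting $P$ be the \NP-complete problem from which the published \NP-hardness reduction for \textsc{Choco Banana} proceeds. For the meta-theorem to apply we need FCP-$P$ to already be known \sigmatwop-complete; if the reduction starts from a problem not yet treated, one first secures FCP-$P$ by a prior application of Theorem~\ref{thm:meta}, so that the corollaries form a chain ultimately grounded in the \sigmatwop-completeness of FCP-$1$-in-$3$ SAT. What then remains is to check that this reduction is (i) parsimonious and (ii) satisfies the certificate-structure condition of the meta-theorem.

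First I would fix the certificate encodings. A solution to \textsc{Choco Banana} is a shading of the board, so $x_Q \in \{0,1\}^n$ records for each of the $n$ cells whether it is black or white, while $x_P$ encodes the certificate of $P$ (e.g.\ a truth assignment). With these encodings pinned down, I would confirm parsimony by arguing that the variable, clause, and wire gadgets each admit exactly one legal shading once the value of the corresponding part of $x_P$ is fixed, so that the solution-counting map $\mathcal{S}$ is the identity. Auxiliary cells forced to a fixed color irrespective of the assignment are harmless: they fall under the \emph{constant} case of the theorem.

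The crux is the certificate-structure condition. Here I would partition the $n$ cells into the $m+1$ groups the theorem demands: every cell must either (a) have its color determined by exactly one position of $x_P$ through an injective $f_j : \{0,1\} \to \{0,1\}$ --- necessarily the identity or negation, both injective --- or (b) be forced independently of $x_P$. I expect the main obstacle to lie precisely here: I must verify, cell by cell, that no cell's color is a genuine function of two or more positions of $x_P$. Wire-crossing and clause gadgets are the danger points, since a naive crossover makes a single cell depend simultaneously on two wires, exactly the situation ruled out in Figure~\ref{fig:ngexample}. If the published reduction contains such a gadget, the repair is either to re-route wires so that each shared cell depends on a single variable, or to show that the shared cells are in fact color-forced and hence fall into category (b).

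Once every cell has been classified into (a) or (b) with all the $f_j$ injective, the hypotheses of Theorem~\ref{thm:meta} are satisfied and the \sigmatwop-hardness of FCP-\textsc{Choco Banana} follows. Membership in \sigmatwop is immediate, since \textsc{Choco Banana} is in \NP: one guesses a clue and verifies uniqueness of the induced solution with a single \NP~oracle query. Combining the two yields \sigmatwop-completeness.
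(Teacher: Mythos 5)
Your proposal takes exactly the paper's route: the paper's entire proof of this corollary is the one-sentence assertion that the \NP-hardness reduction of \cite{iwamoto2024choco} satisfies the conditions of Theorem~\ref{thm:meta}, which is precisely the strategy you outline, and you additionally make explicit the \sigmatwop-membership argument and the cell-by-cell certificate-structure checklist that the paper leaves implicit. The only divergence is that you leave the crucial verification conditional (allowing that a crossover or clause gadget might violate the single-dependence condition and require repair), whereas the paper asserts outright that the published reduction already satisfies the conditions; since the paper supplies no detail beyond that assertion, your proposal is essentially the same proof.
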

\begin{proof}
    The reduction presented in \cite{iwamoto2024choco}, which establishes the \NP-hardness of \textsc{Choco Banana}, satisfies the conditions of Theorem~\ref{thm:meta}.
\end{proof} 

\begin{corollary}
    \textup{FCP-\textsc{Nondango}} is \sigmatwop-complete. 
\end{corollary}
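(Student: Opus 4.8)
The plan is to follow exactly the template of the two preceding corollaries: exhibit a parsimonious polynomial-time reduction that establishes the \NP-hardness of \textsc{Nondango} and simultaneously satisfies the structural hypotheses of Theorem~\ref{thm:meta}, and then invoke the meta-theorem to transfer \sigmatwop-completeness from FCP-$P$ to FCP-\textsc{Nondango}. First I would fix the base problem $P$ to be one whose FCP variant is already known to be \sigmatwop-complete, most naturally 1-in-3 SAT by~\cite{Erik2018}, and locate a reduction $R$ proving \textsc{Nondango} \NP-hard from such a problem. Second, I would confirm that $R$ is \emph{parsimonious}, i.e.\ that the back-map $\mathcal{S}$ from \textsc{Nondango} solutions to solutions of $P$ is the identity, so solution counts agree exactly; for a grid-puzzle reduction this amounts to checking that each variable gadget forces a unique local circle/\textit{dango} configuration per truth value and that no clause or region-decomposition gadget introduces spurious extra solutions.

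The technical heart is verifying the certificate-structure condition of Theorem~\ref{thm:meta}. Writing $x_Q$ for the certificate encoding a \textsc{Nondango} solution (the placement and black/white state of the circles within each region) and $x_P$ for the certificate of the source instance, I must show that for every index $j$ the value $x_Q[j]$ is either constant with respect to $x_P$ (fixed background cells and rigid gadget scaffolding) or equals $f_j(x_P[i])$ for a \emph{single} index $i$ and an injective $f_j$ (cells whose state is controlled by exactly one variable gadget). I expect the main obstacle to arise at gadget junctions---wire crossings, clause gadgets, and shared region boundaries---where a \textsc{Nondango} cell could in principle encode information about two or more variables at once; as illustrated in Figure~\ref{fig:ngexample}, any such multi-dependence immediately violates the hypothesis, and a non-injective $f_j$ would fail it as well.

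To overcome this, I would inspect the junction cells of the chosen reduction and argue that each output position is in fact driven by a single source character, so that the required $f_j$ exists and is injective. Where a junction cell genuinely depends on several variables, I would locally rewrite the offending gadget---at polynomial cost and preserving parsimony---so that every certificate position is governed by exactly one character of $x_P$, thereby restoring the single-dependence condition. Once every $x_Q[j]$ is shown to be constant or the injective image of a single $x_P[i]$, all hypotheses of Theorem~\ref{thm:meta} are met, and the meta-theorem applies verbatim to yield the \sigmatwop-completeness of FCP-\textsc{Nondango}.
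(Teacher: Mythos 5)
Your proposal takes essentially the same route as the paper: the paper's entire proof is to cite the known \NP-hardness reduction for \textsc{Nondango}~\cite{ruangwises2023nondango} and assert that it satisfies the hypotheses of Theorem~\ref{thm:meta}, which is exactly your plan (the paper does not even spell out the parsimony and single-dependence checks that you outline, nor does it need your fallback of rewriting junction gadgets, since it claims the existing reduction already satisfies the conditions). Your version is, if anything, more explicit about what the verification obligations are.
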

\begin{proof}
    The reduction presented in \cite{ruangwises2023nondango}, which establishes the \NP-hardness of \textsc{Nondango}, satisfies the conditions of Theorem~\ref{thm:meta}.
\end{proof}

\begin{corollary}
    \textup{FCP-\textsc{Double Choco}} is \sigmatwop-complete. 
\end{corollary}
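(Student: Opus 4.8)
The plan is to instantiate Theorem~\ref{thm:meta} with \textsc{Double Choco} in the role of $Q$, exactly as was done for \textsc{Yosenabe}, \textsc{Choco Banana}, and \textsc{Nondango} above. First I would locate the published reduction that establishes the \NP-hardness of \textsc{Double Choco} and identify the source problem $P$ from which it reduces; since the companion puzzle \textsc{Choco Banana} reduces from a \sigmatwop-ready base problem, I expect \textsc{Double Choco} to admit a reduction from the same or a closely related \NP-complete problem (e.g.\ a planar SAT variant or a grid-based puzzle already known to have a \sigmatwop-complete FCP version). The first task is therefore to fix $P$ and confirm that FCP-$P$ is \sigmatwop-complete, which is the first hypothesis of the meta-theorem.

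Next I would verify that the reduction $R$ is \emph{parsimonious}: each satisfying assignment (or solution) of the $P$-instance must correspond to exactly one valid \textsc{Double Choco} board, and vice versa, with $\mathcal{S}$ being the identity. For Nikoli-style grid puzzles this typically amounts to checking that the gadgets are \emph{rigid}, i.e.\ the only freedom in completing a valid board is precisely the freedom in choosing the underlying assignment, so that the solution count is preserved. I would walk through the variable gadgets, clause gadgets, and wiring in the cited construction and confirm no gadget admits spurious alternative fillings.

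The technical heart of the argument, and the step I expect to be the main obstacle, is verifying the structural certificate condition: encoding the \textsc{Double Choco} solution as a certificate string $x_Q \in \Gamma_Q^n$ (naturally, one symbol per cell recording its block color and/or block membership), I must show that for every position $j$, the value $x_Q[j]$ either equals $f_j(x_P[i])$ for a \emph{single} index $i$ and an \emph{injective} $f_j$, or is constant across all solutions. Concretely, each cell of the board must be \emph{locally determined} by at most one bit of the $P$-certificate; wires, turn gadgets, and crossing gadgets are the dangerous spots, since a cell whose color depends on \emph{two} variables (as in Figure~\ref{fig:ngexample}) would violate the hypothesis. I would partition the grid cells into (a) constant cells belonging to fixed frame/background structure and (b) signal-carrying cells, and for group (b) exhibit the index $i$ of the controlling variable together with the injection $f_j$ (here injectivity is easy, as the alphabet per cell is small and distinct source values produce distinct cell states). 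If the cited reduction happens to route two signals through a shared cell, I would note that a standard local modification—splitting or duplicating the offending gadget so each cell carries a single signal—restores the condition while preserving parsimony, after which all hypotheses of Theorem~\ref{thm:meta} are met and the \sigmatwop-completeness of FCP-\textsc{Double Choco} follows.
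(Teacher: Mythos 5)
Your proposal takes essentially the same approach as the paper: the paper's entire proof is a one-line assertion that the \NP-hardness reduction for \textsc{Double Choco} in the cited work satisfies the conditions of Theorem~\ref{thm:meta}, and your checklist (identify $P$ with FCP-$P$ \sigmatwop-complete, verify parsimony, verify the per-position injective/constant certificate condition) is precisely what that assertion amounts to. Your extra contingency about locally modifying gadgets that carry two signals goes beyond the paper, which simply claims the published reduction already satisfies the hypotheses, but it does not change the route.
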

\begin{proof}
    The reduction presented in \cite{djuric2022double}, which establishes the \NP-hardness of \textsc{Double Choco}, satisfies the conditions of Theorem~\ref{thm:meta}.
\end{proof}

\section{Conclusion and Future Work}

In this study, we proposed and proved a meta-theorem that establishes the \sigmatwop-completeness of FCP problems derived from \NP-complete problems.  
Theorem~\ref{thm:meta} relies on preserving the clue size $k$ between instances under a parsimonious polynomial-time reduction.  
However, the requirement of preserving clue size is quite restrictive in practice.

One direction for future work is to relax these conditions so that \sigmatwop-hardness can be shown even when the clue sizes in the reduction differ between FCP problems.

Furthermore, the requirement in Theorem~\ref{thm:meta} that some entries in $x_Q$ are constant and independent of $x_P$ arises due to the nature of parsimonious reductions.  
In practice, many reductions (even non-parsimonious ones) produce $x_Q[i]$ values that are injectively determined from $x_P$, but do not contain truly constant values.  
In such cases, the uniqueness of the solution to the corresponding FCP instance cannot be directly guaranteed.  



\bibliography{metatheorem}

\begin{figure}[ht]
    \centering    
    \includegraphics[scale=0.082]{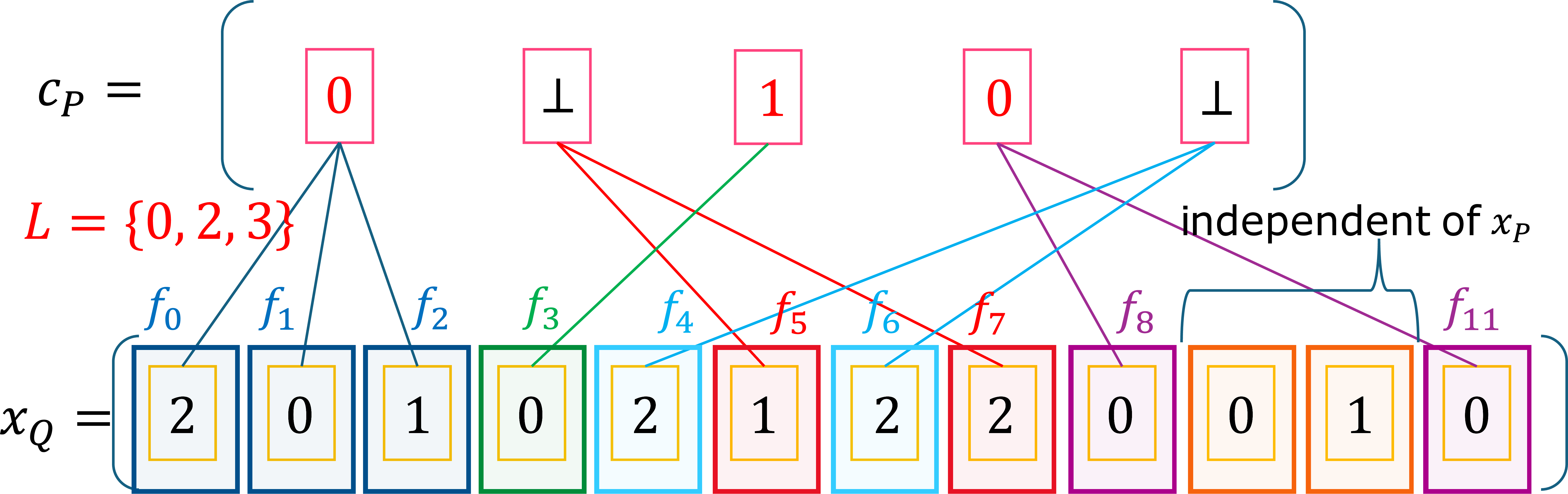}
    \caption{Given a clue \( c_P \subset x_P \) as shown, the set of specified indices is \( L = \{0, 2, 3\} \).}
    \label{fig:clue1}
\end{figure}

\begin{figure}[ht]
    \centering    
    \includegraphics[scale=0.082]{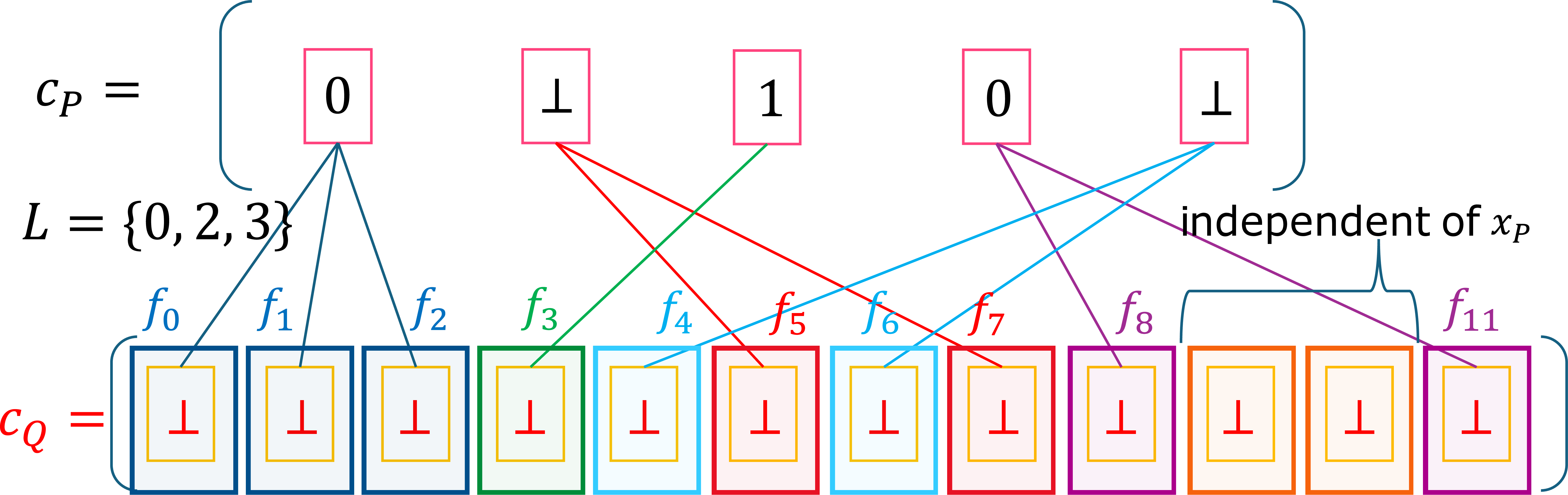}
    \caption{Initialization of clue \( c_Q \): fill all entries with \( \bot \).}
    \label{fig:clue2}
\end{figure}

\begin{figure}[ht]
    \centering    
    \includegraphics[scale=0.082]{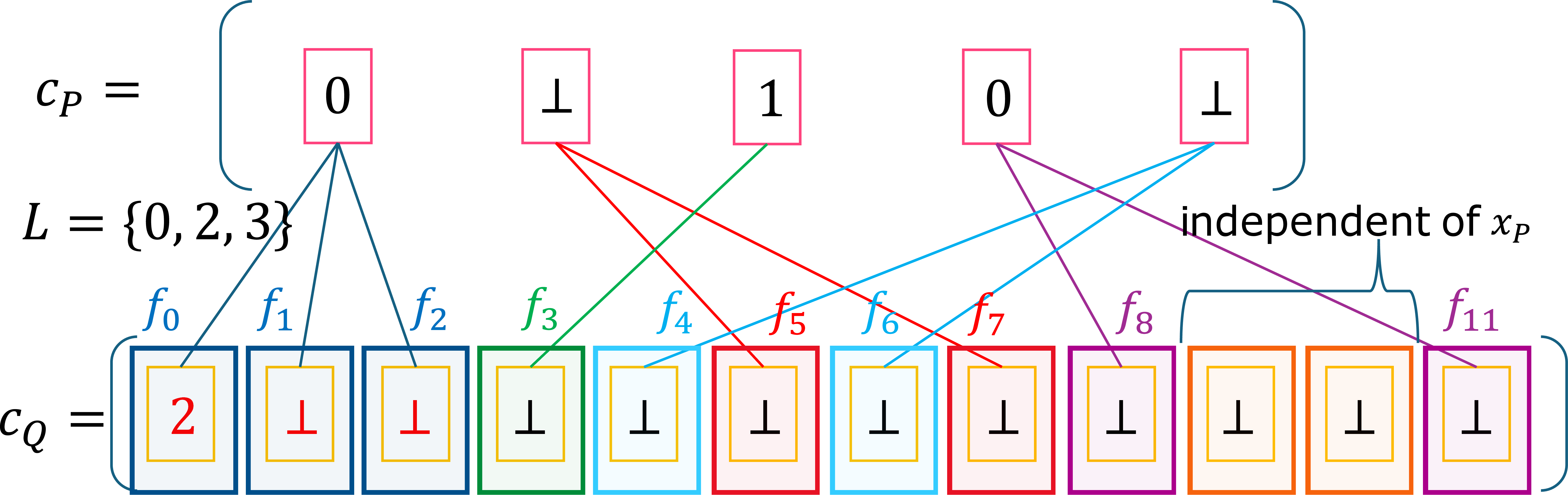}
    \caption{For each \( j \) such that \( x_Q[j] \) is determined by some \( f_j \) and the corresponding index \( i \in L \), update only the smallest such \( j \) by setting \( c_Q[j] = f_j(c_P[i]) \). Leave the rest as \( \bot \).}
    \label{fig:clue3}
\end{figure}

\begin{figure}[ht]
    \centering    
    \includegraphics[scale=0.082]{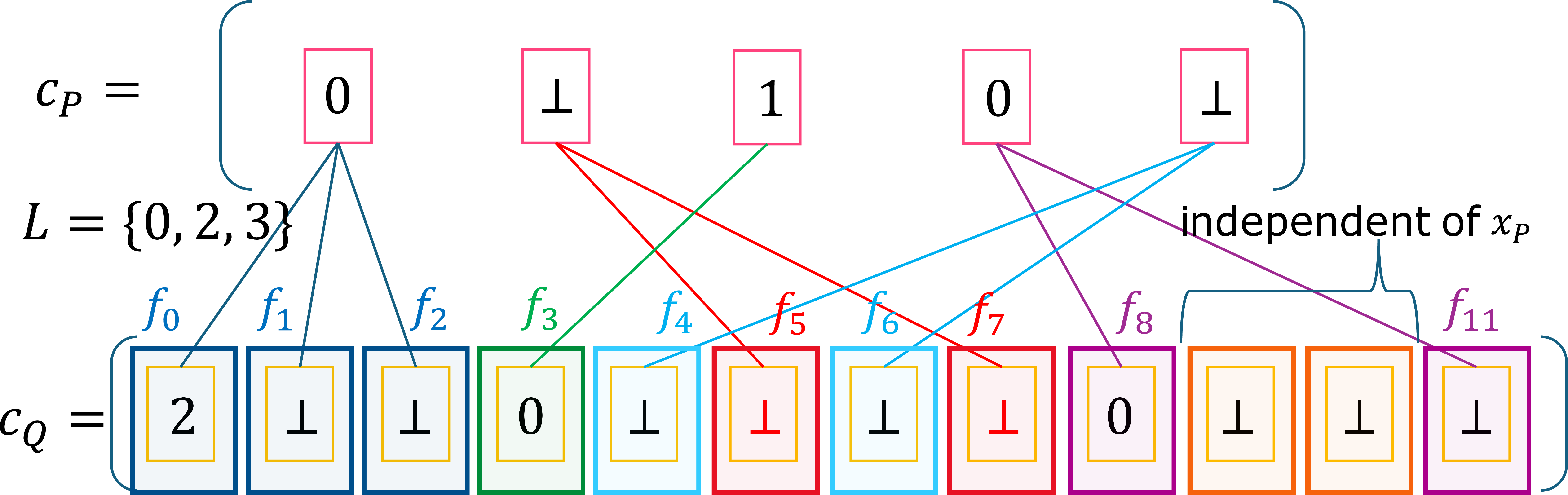}
    \caption{For each \( j \) such that \( x_Q[j] \) is determined by \( x_P[i] \) with \( i \notin L \), leave \( c_Q[j] \) unchanged as \( \bot \).}
    \label{fig:clue4}
\end{figure}

\begin{figure}[ht]
    \centering    
    \includegraphics[scale=0.082]{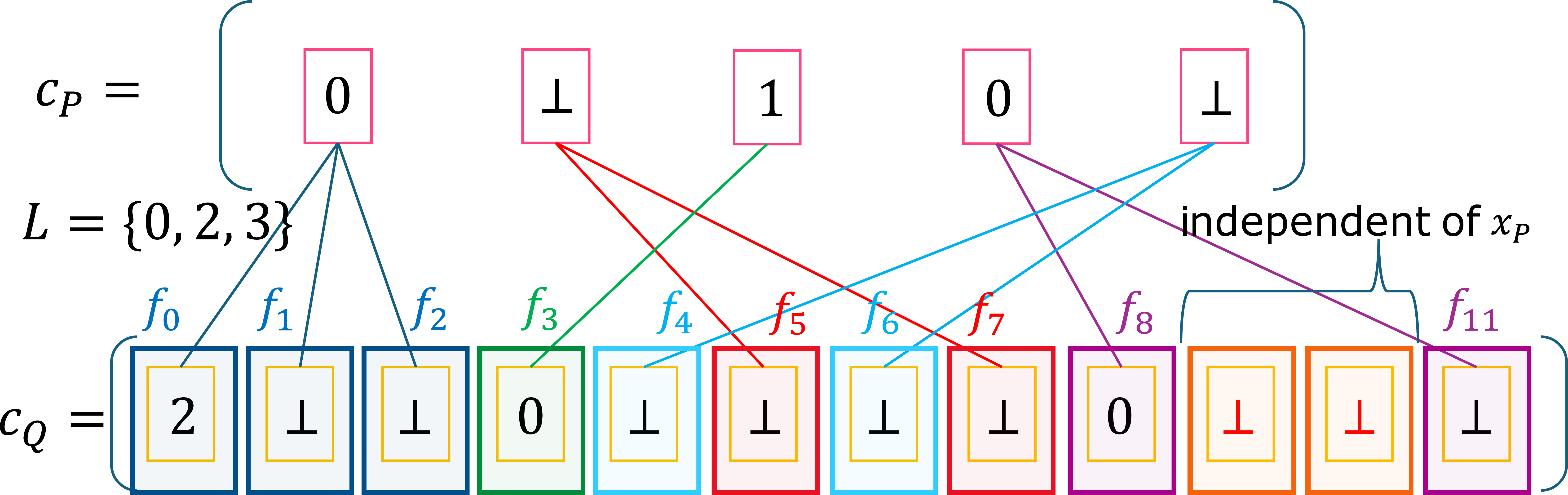}
    \caption{For each \( j \) such that \( x_Q[j] \) is independent of \( x_P \), leave \( c_Q[j] \) unchanged as \( \bot \).}
    \label{fig:clue5}
\end{figure}
\end{document}